\documentclass[conference]{IEEEtran}
\usepackage{cite}
\usepackage{amsmath,amssymb,amsfonts,amsthm}
\usepackage{graphicx}
\usepackage{textcomp}
\usepackage{booktabs}
\usepackage{array}
\usepackage{url}
\usepackage{eso-pic}
\AddToShipoutPictureFG*{\AtPageLowerLeft{\put(\LenToUnit{1.6cm},\LenToUnit{0.9cm}){\parbox[b]{\dimexpr\paperwidth-3.2cm\relax}{\centering\footnotesize\itshape This work has been submitted to the IEEE for possible publication. Copyright may be transferred without notice, after which this version may no longer be accessible.}}}}

\renewcommand{\figurename}{Figure}
\renewcommand{\tablename}{Table}
\renewcommand{\thetable}{\arabic{table}}

\renewcommand{\IEEEkeywordsname}{Keywords}
\newcommand{\herm}{^{\mathsf{H}}}
\newcommand{\inv}{^{-1}}
\newcommand{\vect}[1]{\mathbf{#1}}
\newcommand{\SINR}{\mathrm{SINR}}
\newcommand{\opt}{\mathrm{opt}}

\theoremstyle{plain}
\newtheorem{proposition}{Proposition}

\begin{document}

\title{Closed-Form Subspace Loss and Snapshot-Limited Dimension Selection for Beamspace MVDR Radar}

\author{
\IEEEauthorblockN{Moris S. Mukwayi\IEEEauthorrefmark{2}*, Michael A. van Wyk\IEEEauthorrefmark{2}, Andre M. McDonald\IEEEauthorrefmark{3}}
\IEEEauthorblockA{\IEEEauthorrefmark{2}School of Electrical and Information Engineering, University of the Witwatersrand, Johannesburg 2000, South Africa}
\IEEEauthorblockA{\IEEEauthorrefmark{3}Defence and Security Cluster, Council for Scientific and Industrial Research, Pretoria 0184, South Africa}
\IEEEauthorblockA{*Corresponding author: moris.qed@gmail.com}
}

\maketitle

\begin{abstract}
A beamspace minimum variance distortionless response (MVDR) processor adapts a window of
$B$ beams instead of all $N$ elements of a large array, trading a subspace loss that falls
with $B$ against a covariance estimation loss that rises with $B$ for a fixed number $K$ of
training snapshots. We give the subspace loss of a DFT beam window in point jamming in
closed form: exactly for any jammer power, and in the high interference-to-noise ratio
(INR) limit through the Dirichlet kernel of the array. Combined with the known scaled-Beta
law of sample matrix inversion (SMI) with a fixed transform, it gives a window size
$B^\star(K)$ for unloaded SMI. On a simulated $32\times32$ array with carriers from 4 to
8~GHz, noise jammers and Billingsley land clutter, the exact form is within 0.04~dB of the
clairvoyant subspace loss for 25- and 49-beam windows at jammer levels from 10 to 50~dB.
For a 9-beam window whose three columns are taken by three jammers near the target's
elevation, the clutter that the form leaves out costs up to 5~dB. With the nominal $K$,
$B^\star(K)$ picks the lowest-loss window in all 30 cases with more than one runnable
window. The rule is an oracle: it needs the jammer parameters or the true covariance, and
errors of 0.25 bin rms in the jammer directions add a mean loss of 0.04~dB. With diagonal
loading the best window was larger still wherever the rule chose more than the smallest
window.
\end{abstract}

\begin{IEEEkeywords}
adaptive beamforming, beamspace processing, MVDR, phased array radar, reduced-dimension
processing, sample support
\end{IEEEkeywords}

\section{Introduction}

Adaptive beamforming on a large digital array needs an estimate of the
interference-plus-noise covariance. With $N$ elements, sample matrix inversion (SMI) needs
about $2N$ independent training snapshots to come within 3~dB of the optimum \cite{rmb1974},
and the inverse costs of order $N^3$ operations. A ground radar with a $32\times32$ array
would need about 2000 homogeneous range cells in one dwell, and land clutter is seldom
homogeneous over that many cells. The same limit shapes recent work: random-matrix estimators of the inverse
reduced-dimension covariance \cite{song2022} and sparse-Bayesian reduced-dimension space-time
adaptive processing (STAP) from a single training sample \cite{cao2024}. Diagonal loading
\cite{carlson1988,muhammad2022} and reduced-rank estimators \cite{somasundaram2015} lower the
support that the estimate needs. Reduced-dimension processing lowers the size of the problem
instead: a fixed transform maps the array onto a few channels, and the processor adapts there
\cite{vanveen1988,noroozi2025,song2022}. The SINR loss of SMI behind a fixed transform is known to be a
scaled Beta variable \cite{besson2021}. For massive digital arrays, windowed beamspace MVDR on a
spatial FFT comes close to full-dimensional MVDR with much lower complexity \cite{noroozi2025};
its authors studied the trade between window size and FFT resolution in simulation
\cite{noroozi2025} and the window size in tiled beamspace processing \cite{noroozi2025b}. Windowed and tiled
beamspace processing has also been used to cut the cost of massive multiuser MIMO reception
\cite{han2024,cebeci2024}.

How large the window should be for a given number of training snapshots $K$ is usually
settled in simulation \cite{noroozi2025,noroozi2025b}, by the energy that a window captures
from a single source \cite{cebeci2025}, or by dimension-selection rules computed online from
the data \cite{somasundaram2015}. To our knowledge, none gives the subspace loss of a window from the array
and jammer geometry before data are taken, although that loss depends on the full covariance.
Three ingredients here are standard: the projection form of the subspace loss
(Proposition~1) \cite{vanveen1988}, the Woodbury identity that proves Proposition~2, and the
scaled-Beta law of SMI behind a fixed transform (Proposition~3) \cite{rmb1974,besson2021}.
What is new is (i) the closed form \eqref{eq:exact} for the subspace loss of a DFT window in
point jamming, exact at any INR, with its high-INR limit \eqref{eq:limit}, the one-jammer
form \eqref{eq:one} through the Dirichlet kernel of the array, and a bound on the error of
the limit; (ii) the window size $B^\star(K)$ of \eqref{eq:bstar}, which follows from
\eqref{eq:exact} and Proposition~3; and (iii) tests of both at finite INR, in land clutter,
with loading and under array errors, including the equivalent Gaussian support of clutter
training data. $B^\star(K)$ is an oracle design rule, because it needs the jammer parameters
or the true covariance; Section~\ref{sec:sens} tests its sensitivity to errors in them.
The array
changes its carrier from dwell to dwell; frequency agility is used to resist jamming
\cite{zhu2024}, and adaptive beamforming continues to be developed for jammer suppression on
wideband, distributed and frequency-diverse arrays \cite{ma2024,han2023,nysaeter2022}. How the
window follows the carrier is treated in a separate paper by the authors, and here the window is always centred on the exact fractional
bin of the look direction.

\section{Signal Model}

\subsection{Array and interference}
The array is a uniform rectangular array of $N_x\times N_y=32\times32$ elements with pitch
$d=18.75$~mm, half a wavelength at 8~GHz. For direction cosines $(u,v)$ the response of
element $(m,n)$ at carrier $f_h$ is
\begin{equation}
a_{mn}(u,v;f_h)=\exp\!\left(j\,\frac{2\pi f_h d}{c}\,(mu+nv)\right)
\label{eq:steer}
\end{equation}
for $m,n=0,\dots,31$, and $\vect{a}\in\mathbb{C}^{N}$ stacks the $N=1024$ responses of the
look direction. The interference-plus-noise covariance of a range cell is
\begin{equation}
\vect{R}=\sigma_n^2\vect{I}+\vect{A}\vect{D}\vect{A}\herm+\vect{R}_c,
\label{eq:cov}
\end{equation}
where $\vect{A}=[\vect{a}_1,\dots,\vect{a}_J]$ holds the steering vectors of $J$ independent
noise jammers with powers $\vect{D}=\mathrm{diag}(\sigma_1^2,\dots,\sigma_J^2)$, and
$\vect{R}_c$ is the clutter covariance, a sum over patches of the ground ring of the cell.
The training snapshots are drawn from cells around the cell under test and hold no target.

\subsection{MVDR and sample matrix inversion}
The MVDR weight and its output SINR are
\begin{equation}
\vect{w}=\frac{\vect{R}\inv\vect{a}}{\vect{a}\herm\vect{R}\inv\vect{a}},\qquad
\SINR_{\opt}=\sigma_s^2\,\vect{a}\herm\vect{R}\inv\vect{a}.
\label{eq:mvdr}
\end{equation}
SMI replaces $\vect{R}$ by $\hat{\vect{R}}=K^{-1}\sum_r\vect{x}_r\vect{x}_r\herm+\sigma_{\mathrm{DL}}^2\vect{I}$.
Without loading, and for $K\ge N$ independent, zero-mean complex Gaussian training
snapshots, the SINR of the estimated weight divided by the optimum has mean \cite{rmb1974}
\begin{equation}
\mathbb{E}\!\left[\frac{\SINR(\hat{\vect{R}})}{\SINR_{\opt}}\right]=\frac{K+2-N}{K+1}.
\label{eq:rmb}
\end{equation}

\subsection{Beamspace processing}
A beamspace transform $\vect{T}\in\mathbb{C}^{N\times B}$, $B=W^2$, keeps the $W\times W$ DFT
beams around the look direction. Its columns are
\begin{equation}
[\vect{T}]_{(m,n),(k,l)}=\frac{1}{\sqrt{N}}\,e^{\,j2\pi\left[m(p^\star+k)/N_x+n(q^\star+l)/N_y\right]}
\label{eq:T}
\end{equation}
for $k,l=-(W-1)/2,\dots,(W-1)/2$, where $p^\star=u_sN_xdf_h/c$ and $q^\star=v_sN_ydf_h/c$
are the fractional bin indices of the look direction. The columns are orthonormal, and
substituting $p^\star$ and $q^\star$ in \eqref{eq:steer} shows that $\vect{a}$ is $\sqrt{N}$
times column $(0,0)$, so $\vect{a}_b=\vect{T}\herm\vect{a}=\sqrt{N}\,\vect{e}_0$ at every
carrier. The processor adapts $\vect{w}_b=\vect{R}_b\inv\vect{a}_b/(\vect{a}_b\herm\vect{R}_b\inv\vect{a}_b)$
with $\vect{R}_b=\vect{T}\herm\vect{R}\vect{T}$. The following standard result
\cite{vanveen1988} defines the subspace loss, the transform loss of the
reduced-dimension literature.

\begin{proposition}[Subspace loss]
With the true covariance, the SINR of the beamspace processor divided by the element-space
optimum is
\begin{equation}
\rho(\vect{T})=\frac{\vect{a}\herm\vect{T}(\vect{T}\herm\vect{R}\vect{T})\inv\vect{T}\herm\vect{a}}{\vect{a}\herm\vect{R}\inv\vect{a}}
=\frac{\|\vect{P}_{\mathcal S}\vect{R}^{-1/2}\vect{a}\|^2}{\|\vect{R}^{-1/2}\vect{a}\|^2}\le1,
\label{eq:rho}
\end{equation}
where $\vect{P}_{\mathcal S}$ projects onto $\mathcal S=\mathrm{range}(\vect{R}^{1/2}\vect{T})$.
Equality holds if and only if $\vect{R}\inv\vect{a}\in\mathrm{range}(\vect{T})$.
\end{proposition}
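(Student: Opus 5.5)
The plan has three steps: compute the beamspace SINR with the true covariance, rewrite the numerator as a projection norm by whitening, and read off the bound and its equality case from properties of orthogonal projectors. Throughout we take $\vect{R}\succ0$ (it contains $\sigma_n^2\vect{I}$) and $\vect{T}$ of full column rank (its columns are orthonormal). Then $\vect{R}_b=\vect{T}\herm\vect{R}\vect{T}$ is positive definite and invertible.

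\textbf{Step 1: the SINR ratio.} The beamspace processor applies the element-space weight $\vect{T}\vect{w}_b$ to the data. Its output interference-plus-noise power is $\vect{w}_b\herm\vect{R}_b\vect{w}_b$, and its signal gain is $|\vect{w}_b\herm\vect{a}_b|^2$ with $\vect{a}_b=\vect{T}\herm\vect{a}$. Substituting the MVDR weight $\vect{w}_b=\vect{R}_b\inv\vect{a}_b/(\vect{a}_b\herm\vect{R}_b\inv\vect{a}_b)$ gives
\begin{equation*}
\SINR=\sigma_s^2\,\vect{a}_b\herm\vect{R}_b\inv\vect{a}_b=\sigma_s^2\,\vect{a}\herm\vect{T}(\vect{T}\herm\vect{R}\vect{T})\inv\vect{T}\herm\vect{a}.
\end{equation*}
Dividing by $\SINR_{\opt}$ from \eqref{eq:mvdr} yields the first expression for $\rho(\vect{T})$.

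\textbf{Step 2: projection form.} Set $\vect{M}=\vect{R}^{1/2}\vect{T}$ and $\vect{b}=\vect{R}^{-1/2}\vect{a}$. Since $\vect{T}=\vect{R}^{-1/2}\vect{M}$, we have $\vect{T}\herm\vect{a}=\vect{M}\herm\vect{b}$ and $\vect{T}\herm\vect{R}\vect{T}=\vect{M}\herm\vect{M}$. The numerator therefore becomes $\vect{b}\herm\vect{M}(\vect{M}\herm\vect{M})\inv\vect{M}\herm\vect{b}$. The matrix $\vect{P}_{\mathcal S}=\vect{M}(\vect{M}\herm\vect{M})\inv\vect{M}\herm$ is the orthogonal projector onto $\mathcal S=\mathrm{range}(\vect{M})$; it is idempotent and Hermitian. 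Hence the numerator equals $\|\vect{P}_{\mathcal S}\vect{b}\|^2$, and the denominator $\vect{a}\herm\vect{R}\inv\vect{a}$ equals $\|\vect{b}\|^2$. This gives the second expression in \eqref{eq:rho}.

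\textbf{Step 3: bound and equality.} Pythagoras gives $\|\vect{P}_{\mathcal S}\vect{b}\|^2=\|\vect{b}\|^2-\|(\vect{I}-\vect{P}_{\mathcal S})\vect{b}\|^2\le\|\vect{b}\|^2$, so $\rho\le1$. Equality holds if and only if $\vect{b}\in\mathcal S$, that is, $\vect{R}^{-1/2}\vect{a}=\vect{R}^{1/2}\vect{T}\vect{c}$ for some coefficient vector $\vect{c}$. Multiplying by $\vect{R}^{-1/2}$ turns this into $\vect{R}\inv\vect{a}=\vect{T}\vect{c}$, i.e.\ $\vect{R}\inv\vect{a}\in\mathrm{range}(\vect{T})$, and the converse follows by reversing the steps. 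Note that $\vect{a}\ne\vect{0}$, so $\|\vect{b}\|>0$ and the ratio is well defined.

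The argument is routine. The only point needing care is the equality characterisation, where the invertibility of $\vect{R}^{1/2}$ is what makes membership in $\mathcal S$ equivalent to membership in $\mathrm{range}(\vect{T})$ after the change of variables.
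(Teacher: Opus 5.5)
Your proof is correct and follows the paper's route: you compute the beamspace SINR (the paper phrases this as the maximum over restricted weights $\vect{w}=\vect{T}\vect{z}$), whiten with $\vect{R}^{1/2}$, and recognise the orthogonal projector onto $\mathrm{range}(\vect{R}^{1/2}\vect{T})$. Your Step~3 also proves the bound $\rho\le1$ and the equality case, which the paper leaves implicit.
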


\begin{IEEEproof}
Restricting $\vect{w}=\vect{T}\vect{z}$ in the MVDR problem gives the maximum output SINR
$\sigma_s^2\vect{a}_b\herm\vect{R}_b\inv\vect{a}_b$. With $\vect{S}=\vect{R}^{1/2}\vect{T}$ and
$\tilde{\vect{a}}=\vect{R}^{-1/2}\vect{a}$, the numerator of \eqref{eq:rho} is
$\tilde{\vect{a}}\herm\vect{S}(\vect{S}\herm\vect{S})\inv\vect{S}\herm\tilde{\vect{a}}=\|\vect{P}_{\mathcal S}\tilde{\vect{a}}\|^2$.
\end{IEEEproof}

\section{Closed-Form Subspace Loss}

\begin{proposition}[Closed-form subspace loss]
Without clutter, with $\vect{A}_b=\vect{T}\herm\vect{A}$ and $\vect{\Lambda}=\sigma_n^2\vect{D}\inv$,
the subspace loss \eqref{eq:rho} is, for any jammer powers,
\begin{equation}
\rho=\frac{N-\vect{a}_b\herm\vect{A}_b(\vect{A}_b\herm\vect{A}_b+\vect{\Lambda})\inv\vect{A}_b\herm\vect{a}_b}
{N-\vect{a}\herm\vect{A}(\vect{A}\herm\vect{A}+\vect{\Lambda})\inv\vect{A}\herm\vect{a}}.
\label{eq:exact}
\end{equation}
If $c^2<1$, where $c^2=\|\vect{P}_{\mathcal A}\vect{a}\|^2/N$ is the squared cosine between the look direction and
$\mathcal A=\mathrm{range}(\vect{A})$, it tends as the jammer powers grow to the following form, in which the inverses are
pseudo-inverses if $\vect{A}$ or $\vect{A}_b$ lacks full column rank:
\begin{equation}
\rho_\infty=\frac{N-\vect{a}_b\herm\vect{A}_b(\vect{A}_b\herm\vect{A}_b)\inv\vect{A}_b\herm\vect{a}_b}
{N-\vect{a}\herm\vect{A}(\vect{A}\herm\vect{A})\inv\vect{A}\herm\vect{a}}
\label{eq:limit}
\end{equation}
For one jammer it reduces to
\begin{equation}
\rho_\infty=\frac{1-\gamma^2/\eta}{1-\gamma^2},\qquad
\gamma=\frac{|\vect{a}\herm\vect{a}_1|}{N},\qquad
\eta=\frac{\|\vect{T}\herm\vect{a}_1\|^2}{N},
\label{eq:one}
\end{equation}
where $\gamma$ is the quiescent response of the array toward the jammer and $\eta$ the share
of the jammer's energy that the window captures. With the jammer offset $(\Delta p,\Delta q)$
bins from the look direction,
\begin{equation}
\begin{aligned}
\gamma&=|D_{N_x}(\Delta p)|\,|D_{N_y}(\Delta q)|,\\
\eta&=\sum_{k,l}|D_{N_x}(\Delta p-k)|^2\,|D_{N_y}(\Delta q-l)|^2,
\end{aligned}
\label{eq:dirichlet}
\end{equation}
with $D_M(x)=\sin(\pi x)/[M\sin(\pi x/M)]$ and $k,l$ running over the window.
\end{proposition}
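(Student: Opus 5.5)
Starting from Proposition~1, $\rho=\vect{a}_b\herm\vect{R}_b\inv\vect{a}_b/\vect{a}\herm\vect{R}\inv\vect{a}$, I would evaluate numerator and denominator separately with the Woodbury identity. Without clutter, $\vect{R}=\sigma_n^2\vect{I}+\vect{A}\vect{D}\vect{A}\herm$, and since $\vect{T}\herm\vect{T}=\vect{I}_B$ we get $\vect{R}_b=\sigma_n^2\vect{I}_B+\vect{A}_b\vect{D}\vect{A}_b\herm$, which has the same structure. Woodbury gives
\[
\vect{R}\inv=\sigma_n^{-2}\bigl[\vect{I}-\vect{A}(\vect{A}\herm\vect{A}+\sigma_n^2\vect{D}\inv)\inv\vect{A}\herm\bigr],
\]
and the analogous formula holds for $\vect{R}_b$ with $\vect{A}_b$. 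Using $\vect{a}\herm\vect{a}=N$ and $\vect{a}_b\herm\vect{a}_b=N$ (because $\vect{a}_b=\sqrt N\vect{e}_0$), the common factor $\sigma_n^{-2}$ cancels and \eqref{eq:exact} follows directly. Invertibility of $\vect{A}\herm\vect{A}+\vect{\Lambda}$ is automatic, since $\vect{\Lambda}\succ0$.

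For the limit, I would write $\vect{A}(\vect{A}\herm\vect{A}+\vect{\Lambda})\inv\vect{A}\herm$ as a function of $\vect{\Lambda}\to0$. Taking an SVD $\vect{A}=\vect{U}\vect{\Sigma}\vect{V}\herm$ shows that this matrix converges to the orthogonal projector $\vect{P}_{\mathcal A}$ onto $\mathrm{range}(\vect{A})$, even when $\vect{A}$ is rank deficient. In that case $\vect{A}(\vect{A}\herm\vect{A})^{+}\vect{A}\herm=\vect{P}_{\mathcal A}$. This is the main technical point: for non-scalar $\vect{\Lambda}$ I would use the regularized-projector limit $\lim_{\vect{\Lambda}\to0}\vect{A}(\vect{A}\herm\vect{A}+\vect{\Lambda})\inv\vect{A}\herm=\vect{P}_{\mathcal A}$, valid for any positive definite $\vect{\Lambda}$. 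It holds because the matrix equals $\vect{I}-\vect{\Lambda}'$-weighted residual terms, or more simply because
\[
\vect{I}-\vect{A}(\vect{A}\herm\vect{A}+\vect{\Lambda})\inv\vect{A}\herm=\sigma_n^2\vect{R}\inv,
\]
whose restriction to $\mathcal A$ vanishes and which is the identity on $\mathcal A^\perp$ in the limit. The same argument applies to $\vect{A}_b$. The denominator then tends to $N(1-c^2)$, which is nonzero exactly when $c^2<1$, so the ratio converges to \eqref{eq:limit}. The numerator limit $N-\|\vect{P}_{\mathcal A_b}\vect{a}_b\|^2$ needs no condition.

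For one jammer, $\vect{A}=\vect{a}_1$ with $\|\vect{a}_1\|^2=N$. The denominator becomes $N-|\vect{a}\herm\vect{a}_1|^2/N=N(1-\gamma^2)$. In the numerator, $\vect{a}_b\herm\vect{A}_b=\vect{a}\herm\vect{T}\vect{T}\herm\vect{a}_1$. Since $\vect{a}=\sqrt N\,\vect{t}_{00}$ lies in $\mathrm{range}(\vect{T})$, we have $\vect{T}\vect{T}\herm\vect{a}=\vect{a}$, so this inner product equals $\vect{a}\herm\vect{a}_1$. Also $\vect{A}_b\herm\vect{A}_b=N\eta$, so the numerator is $N(1-\gamma^2/\eta)$, which gives \eqref{eq:one}. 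If $\eta=0$, the pseudo-inverse makes the correction term zero; that case is also consistent because $\gamma\le\sqrt\eta$.

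Finally, for \eqref{eq:dirichlet}, each inner product of a steering vector with a DFT column factors over $m$ and $n$ into geometric sums. With bin offset $x$ (jammer bin minus column bin), $\sum_{m=0}^{M-1}e^{j2\pi mx/M}$ has modulus $|\sin(\pi x)/\sin(\pi x/M)|=M|D_M(x)|$. Dividing by the normalizations ($N$ for $\gamma$, and $\sqrt N$ per column with $\eta$ defined over $N$) gives $\gamma=|D_{N_x}(\Delta p)||D_{N_y}(\Delta q)|$ and $|\vect{t}_{kl}\herm\vect{a}_1|^2/N=|D_{N_x}(\Delta p-k)|^2|D_{N_y}(\Delta q-l)|^2$. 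Summing over the window yields $\eta$. The only care needed is to use the same carrier scaling in \eqref{eq:steer} and \eqref{eq:T}, so that offsets are measured in fractional bins.
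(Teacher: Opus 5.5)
Your proposal is correct and follows essentially the paper's route: Woodbury in element space and beamspace using $\vect{T}\herm\vect{T}=\vect{I}$ and $\|\vect{a}\|^2=\|\vect{a}_b\|^2=N$, convergence of the regularized quadratic forms to squared projections, then the $J=1$ specialization and the separable geometric sums that give the Dirichlet kernels. The one step you only assert is that $\sigma_n^2\vect{R}\inv=(\vect{I}+\vect{A}\vect{\Lambda}\inv\vect{A}\herm)\inv$ vanishes on $\mathcal A$ in the limit; it holds because the eigenvalues there are $1/(1+s_i^2)$, where the $s_i$ are the nonzero singular values of $\vect{A}\vect{\Lambda}^{-1/2}$, which are at least $\min_j(\sigma_j/\sigma_n)$ times the smallest nonzero singular value of $\vect{A}$ and so grow without bound — exactly the paper's argument, so the vague ``$\vect{\Lambda}'$-weighted residual'' remark can be dropped.
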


\begin{IEEEproof}
By the Woodbury identity, $\vect{a}\herm\vect{R}\inv\vect{a}=\sigma_n^{-2}[N-\vect{a}\herm\vect{A}(\vect{A}\herm\vect{A}+\vect{\Lambda})\inv\vect{A}\herm\vect{a}]$
because $\vect{a}\herm\vect{a}=N$. The transform is orthonormal, so
$\vect{R}_b=\sigma_n^2\vect{I}+\vect{A}_b\vect{D}\vect{A}_b\herm$ and the same identity with
$\|\vect{a}_b\|^2=N$ gives the numerator of \eqref{eq:exact}. Letting all powers grow, the nonzero singular values of $\vect{A}\vect{\Lambda}^{-1/2}$ grow without bound, so the
quadratic forms tend to the squared norms of the projections of $\vect{a}$ and $\vect{a}_b$ onto
$\mathrm{range}(\vect{A})$ and $\mathrm{range}(\vect{A}_b)$, which gives \eqref{eq:limit}. For $J=1$, $|\vect{a}_b\herm\vect{T}\herm\vect{a}_1|^2=N^2\gamma^2$ and
$\|\vect{T}\herm\vect{a}_1\|^2=N\eta$, which gives \eqref{eq:one}; the separable array
factors each inner product into Dirichlet kernels, which gives \eqref{eq:dirichlet}.
\end{IEEEproof}

Form \eqref{eq:exact} costs the same $J\times J$ solves as \eqref{eq:limit} and needs no rank
condition; for one jammer it is $\rho=[1-\gamma^2/(\eta+x)]/[1-\gamma^2/(1+x)]$ with
$x=1/(N\,\mathrm{INR}_1)$. The limit is not uniform: a jammer near a Dirichlet null of the
window adds almost nothing to $\vect{R}_b$ yet enters \eqref{eq:limit} at full weight, and a
jammer at a nonzero integer offset outside the window gives a zero column of $\vect{A}_b$. Let
$\kappa=\lambda_{\min}(\vect{D}^{1/2}\vect{A}_b\herm\vect{A}_b\vect{D}^{1/2})/\sigma_n^2$ (or its
element-space counterpart), and let $c_b^2$ be the squared cosine between $\vect{a}_b$ and $\mathrm{range}(\vect{A}_b)$. In each space the bracket of \eqref{eq:exact} exceeds that of
\eqref{eq:limit} by a factor of at most $1+c^2/[(1-c^2)\kappa]$ (with $c_b$ in the window), which bounds the error of the
limit. The limit therefore needs two things: $c^2<1$, and $\kappa$ large against $c^2/(1-c^2)$. As $c^2\to1$ the optimum is nulled with the jammers
and \eqref{eq:limit} becomes $0/0$ (a jammer on the look direction gives $\rho=1$ from \eqref{eq:exact} at any INR). Rank deficiency of $\vect{A}_b$ makes
$\kappa=0$ and voids the bound, although the pseudo-inverse form is still the limit. Numerically, a singular value below $10^{-8}$ of the largest is set to zero in a
pseudo-inverse, and $\vect{A}_b$ is called nearly singular when the smallest singular value of $\vect{A}_b/\sqrt{N}$ is below $3\times10^{-5}$ of the largest (Section~\ref{sec:closed}).

The centre beam alone contributes $\gamma^2$ to $\eta$, and the window cannot capture more than
the whole jammer, so $\gamma^2\le\eta\le1$ and \eqref{eq:one} lies between 0 and 1. Since
$1-\gamma^2=(\eta-\gamma^2)+(1-\eta)$, the loss is small when $1-\eta\ll\eta-\gamma^2$, that
is, when the window holds much more of the jammer's energy outside the centre beam than the
array receives outside the window, as for a jammer near the centre of one of the outer beams.
A small $\gamma$ alone does not make the loss small: far outside the window $\gamma^2$ and
$\eta$ are both small, the jammer's leakage is spread almost evenly over the window beams
along each direction in which it is far away, and $\gamma^2/\eta$ tends to between $1/W^2$ and
$1/W$, the latter for a jammer in the target's row or column. Such a jammer still has a
limiting loss as low as $10\log_{10}(1-1/W)$: $-1.8$, $-1.0$ and $-0.7$~dB for $W=3$, 5 and 7.
The loss is largest for a jammer within a fraction of a bin of the look direction (as low as
$-2.1$, $-1.2$ and $-0.8$~dB). With several jammers the losses do not add, and
\eqref{eq:exact} or \eqref{eq:limit} has to be evaluated jointly.

Besides point sources and a look direction outside the jammer span, the closed forms assume an ideal, separable
manifold, with no coupling or channel error, and one carrier for the whole dwell. Clutter is
spread in angle, so in clutter the subspace loss has to be computed from \eqref{eq:rho};
Section~\ref{sec:closed} measures what leaving it out of \eqref{eq:exact} costs.

\section{Loss Decomposition and Window Size}

\begin{proposition}[Loss decomposition \cite{besson2021}]
Let the $K\ge B$ training snapshots be independent, zero-mean complex Gaussian with covariance
$\vect{R}$. For unloaded SMI in the beamspace, $\SINR_b/\SINR_{\opt}$ is distributed as
$\rho(\vect{T})\,\beta$ with $\beta\sim\mathrm{Beta}(K+2-B,\,B-1)$, so that
\begin{equation}
\mathbb{E}\!\left[\frac{\SINR_b}{\SINR_{\opt}}\right]=\rho(\vect{T})\,\frac{K+2-B}{K+1}.
\label{eq:decomp}
\end{equation}
\end{proposition}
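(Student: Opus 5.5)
The plan is to reduce the beamspace problem to a standard element-space SMI problem of dimension $B$, apply the Reed--Mallett--Brennan result there, and then chain it with Proposition~1.

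\textbf{Step 1: reduce to a $B$-dimensional SMI problem.} The beamspace snapshots $\vect{y}_r=\vect{T}\herm\vect{x}_r$ are independent, zero-mean complex Gaussian with covariance $\vect{R}_b=\vect{T}\herm\vect{R}\vect{T}$. This matrix is nonsingular because $\vect{R}\succ0$ and $\vect{T}$ has orthonormal columns. The unloaded beamspace estimate is $\hat{\vect{R}}_b=K^{-1}\sum_r\vect{y}_r\vect{y}_r\herm=\vect{T}\herm\hat{\vect{R}}\vect{T}$, and it is invertible with probability one when $K\ge B$. The beamspace processor is then exactly an SMI-MVDR processor in $\mathbb{C}^B$ with steering vector $\vect{a}_b$, true covariance $\vect{R}_b$, and $K$ training samples.

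\textbf{Step 2: split the ratio into two factors.} Write
\begin{equation*}
\frac{\SINR_b}{\SINR_{\opt}}=\frac{\SINR_b}{\sigma_s^2\vect{a}_b\herm\vect{R}_b\inv\vect{a}_b}\cdot\frac{\vect{a}_b\herm\vect{R}_b\inv\vect{a}_b}{\vect{a}\herm\vect{R}\inv\vect{a}}.
\end{equation*}
By Proposition~1, the second factor is the deterministic $\rho(\vect{T})$, because $\vect{a}_b=\vect{T}\herm\vect{a}$. The output SINR of $\hat{\vect{w}}_b$ is evaluated against the true interference. Since $\vect{T}\hat{\vect{w}}_b$ has interference power $\hat{\vect{w}}_b\herm\vect{R}_b\hat{\vect{w}}_b$ and signal gain $|\hat{\vect{w}}_b\herm\vect{a}_b|^2$, the first factor is the usual normalized SINR
\begin{equation*}
\beta=\frac{(\vect{a}_b\herm\hat{\vect{R}}_b\inv\vect{a}_b)^2}{(\vect{a}_b\herm\hat{\vect{R}}_b\inv\vect{R}_b\hat{\vect{R}}_b\inv\vect{a}_b)(\vect{a}_b\herm\vect{R}_b\inv\vect{a}_b)}.
\end{equation*}

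\textbf{Step 3: identify the law of $\beta$.} This is the step that carries the actual content. Whiten with $\vect{R}_b^{-1/2}$, so that $\vect{R}_b\to\vect{I}$ and $\vect{a}_b\to\tilde{\vect{a}}_b$, and then rotate $\tilde{\vect{a}}_b$ onto $\vect{e}_1$ by a unitary matrix. The sample covariance becomes a complex Wishart matrix $\vect{W}\sim\mathcal{CW}_B(K,\vect{I})$, and its distribution is invariant under the rotation. In partitioned form, $\beta$ equals the ratio of the Schur-complement quantity $1/[\vect{W}\inv]_{11}$ to $1/[\vect{W}\inv]_{11}+\|\vect{W}_{22}\inv\vect{W}_{21}\|^2$-type terms. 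The standard Reed--Mallett--Brennan argument, or equivalently the result of \cite{besson2021} for SMI behind a fixed transform, gives $\beta\sim\mathrm{Beta}(K+2-B,B-1)$, independent of $\vect{R}_b$ and $\vect{a}_b$.

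I will cite this result rather than rederive it. The only points to verify are that whitening is legitimate ($\vect{R}_b\succ0$) and that the dimension entering the parameters is $B$, not $N$. The mean of this Beta law is $(K+2-B)/(K+1)$, which is \eqref{eq:rmb} with $N$ replaced by $B$.

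\textbf{Step 4: take expectations.} Since $\rho(\vect{T})$ is deterministic, multiplying it by $\mathbb{E}[\beta]=(K+2-B)/(K+1)$ gives \eqref{eq:decomp}.

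The main obstacle is confirming that the SINR of the reduced processor is measured with the true full-dimension interference. That fact is what makes the ratio factor exactly into $\rho\beta$. It holds because the noise at the beamspace output is $\vect{T}\herm$ applied to element noise, so its covariance is exactly $\vect{R}_b$.
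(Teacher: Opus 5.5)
Your proposal is correct and is essentially the paper's proof. The paper also writes the ratio as the deterministic $\rho(\vect{T})$ of Proposition~1 times the loss of SMI relative to the clairvoyant beamspace processor, notes that the beamspace snapshots are independent Gaussian with covariance $\vect{R}_b$, and invokes the Reed--Mallett--Brennan Beta law with $N$ replaced by $B$. The paper states explicitly a hypothesis you leave implicit in Step~1, namely that $\vect{T}$ does not depend on the training data; your partitioned-Wishart remark in Step~3 is only heuristic, which is harmless because both you and the paper cite the Beta law rather than derive it.
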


\begin{IEEEproof}
$\SINR_b/\SINR_{\opt}$ is $\rho(\vect{T})$ times the loss of SMI relative to the clairvoyant
beamspace processor. The transform does not depend on the data, so the beamspace snapshots are
independent, zero-mean complex Gaussian with covariance $\vect{R}_b$, and that loss is the
Beta variable of \cite{rmb1974} in $B$ dimensions.
\end{IEEEproof}

In decibels, the loss of the mean in \eqref{eq:decomp} is the sum of the subspace loss and the
estimation loss; the mean of the decibel values is not. Maximising the product over the
candidate windows gives
\begin{equation}
B^\star(K)=\arg\max_B\ \rho(B)\,\frac{K+2-B}{K+1},
\label{eq:bstar}
\end{equation}
where $\rho(B)$ comes from \eqref{eq:exact} in point jamming or from \eqref{eq:rho} otherwise.
$B^\star(K)$ is the optimum for unloaded SMI under the assumptions of Proposition~3, and it is an oracle design
rule: $\rho(B)$ needs the jammer directions and powers, through \eqref{eq:exact}, or the true covariance,
through \eqref{eq:rho}, and Table~\ref{tab:choice} evaluates it with the clairvoyant $\rho(B)$.

Proposition~3 also assumes
that $\vect{T}$ does not depend on the training data. That holds here: the position of the window follows from the look direction and the
carrier, and its size from $K$ and $\rho(B)$, none of which is read from the training snapshots. A selector
that estimated $\rho(B)$ from the snapshots that also estimate $\vect{R}_b$ would make $\vect{T}$
data-dependent, so that \eqref{eq:decomp} would no longer follow, and an in-sample estimate would be expected to favour larger
windows. Splitting the snapshots restores the assumption at the price of a smaller $K$ in \eqref{eq:decomp}. Neither selector
is tested here.

The window also sets the computing load: the beamspace solve costs about $B^3/3+KB^2$ operations per
weight update, $3.6\times10^4$ for $B=25$ at $K=50$, against $4.1\times10^8$ for direct
element-space inversion ($N^3/3+KN^2$ with $N=1024$).

Clutter training data are not independent Gaussian snapshots: the texture changes the power
from cell to cell, and pulses of one cell are correlated. We measure the effect on the sample
covariance with an equivalent Gaussian support $K_{\mathrm{eff}}$, the number of independent
Gaussian snapshots for which \eqref{eq:rmb}, with $N$ replaced by $B$, gives the measured
estimation loss $L$ (the SINR of unloaded SMI divided by that of the clairvoyant beamspace
processor on the same dwell):
\begin{equation}
K_{\mathrm{eff}}=\frac{B-2+L}{1-L}.
\label{eq:keff}
\end{equation}
$K_{\mathrm{eff}}$ is inferred from the observed loss and is not a count of independent
snapshots; Monte Carlo scatter can put it slightly above $K$.

Diagonal loading also departs from Proposition~3. It keeps the noise eigenvectors of the
estimate from distorting the pattern \cite{carlson1988}, and with loading between the noise
and the interference eigenvalues the estimation loss depends on the interference rank in the
window instead of on $B$ \cite{besson2021}. A larger window then costs little once that rank
stops growing, and \eqref{eq:bstar} says nothing exact about the best loaded window.

\section{Simulation Setup}

The results come from a Simulink testbench built from library blocks. Each environment model
was checked against an independent reference: SMI reproduces \eqref{eq:rmb} at $K=2N$ to
0.002~dB, and the Simulink model matches a separate MATLAB reference on the same random draws
to within $1.5\times10^{-8}$ (relative). Table~\ref{tab:scen} lists the two scenarios.

Scenario P is a jamming-only reference, and scenario G a ground radar over low-relief rural
land. Clutter patches $1^\circ$ wide over $\pm90^\circ$ carry Billingsley's reflectivity and
his Rice-distributed intrinsic-motion model \cite{billingsley2002}, and a unit-mean texture
sets the power from cell to cell: constant (Rayleigh clutter), Weibull with $a_w=2$ (texture
variance 5) or Gamma with $\nu=1$, which gives K-distributed clutter
\cite{conte1995}. The texture can be held over blocks of 4 or 16 range cells. The
jammers reach the array along a direct path and its ground image. In G2 the target is
$5^\circ$ above the horizon, the jammers are on, and the clutter ridge of the cell under test
lies inside the window. In G1 the target is $1.32^\circ$ up, the ridge lies in the main beam
about 5.5~dB below the noise per element, and the jammers are switched off so that clutter
alone sets the training statistics. All scenario G results use the snapshots before Doppler
filtering: a training snapshot is the array vector of the first pulse of one range cell, or of
its first $M_p$ pulses when pulses are pooled.

\begin{table}[!t]
\caption{Scenarios. Powers are per element and relative to the receiver noise of one element.}
\label{tab:scen}
\centering
\footnotesize
\setlength{\tabcolsep}{3pt}
\begin{tabular}{@{}p{0.48in}p{1.18in}p{1.62in}@{}}
\toprule
 & Scenario P & Scenario G \\
\midrule
Array & \multicolumn{2}{l}{$32\times32$ URA, $d=18.75$~mm, carrier 4 to 8~GHz} \\
Element & isotropic & $\cos^{1.5}\theta$ (power), none behind the face \\
Radar & not modelled & 15~m mast; $M=16$ pulses, PRF 1.5~kHz, 5~MHz (30~m cells) \\
Target & $(u_s,v_s)=(0.55,0)$; also $u_s=0.20$, 0.85 & 12~km, $u_s=0.55$; elevation $5^\circ$ (G2) or $1.32^\circ$ (G1) \\
Jammers & 3 at 30~dB: $(0.675,0)$, $(0.283,0.05)$, $(0.600,0.35)$ & 3 at 100~km and 8~km altitude, $u=0.675$, 0.283, 0.900; 30~dB (G2), 10 to 50~dB in Table~\ref{tab:limit}; none in G1 \\
Clutter & none & Billingsley low-relief rural, wind 4.5~m/s; Rayleigh, Weibull ($a_w=2$) or K ($\nu=1$) texture \\
Training & $K=18$, 50, 98 & $K=16$ to 128 \\
Windows & \multicolumn{2}{l}{$B=9$, 25, 49 ($W=3$, 5, 7); loading $\sigma_{\mathrm{DL}}^2=1$ when used} \\
\bottomrule
\end{tabular}
\end{table}

Each window runs unloaded SMI (when $K>B$), loaded SMI with $\sigma_{\mathrm{DL}}^2$ equal to the
noise power, and the clairvoyant beamspace processor, whose loss against the element-space
optimum is the clairvoyant subspace loss $\rho(\vect{T})$ of \eqref{eq:rho} with the
simulator's covariance. Element-space SMI with the same loading on the same snapshots is the
reference. Losses are averaged over dwells in linear units and shown in decibels. The scenario
P runs hold the carrier at each of 41 frequencies from 4 to 8~GHz for 500 dwells, and the
closed-form test holds it at each of 21 frequencies for 20 dwells; the scenario G runs of
Figure~\ref{fig:est} and Table~\ref{tab:choice} hop pseudo-randomly over the band, 500 dwells
per run.

\section{Results}

\subsection{Closed-form subspace loss}\label{sec:closed}
Table~\ref{tab:limit} gives the error of the limit \eqref{eq:limit} and of the exact form
\eqref{eq:exact}, both without clutter, against the clairvoyant subspace loss in G2. For $B=25$
and 49 the exact form is within 0.04~dB at every INR and clutter level, and the limit is within
0.05~dB from 20~dB up and 0.26~dB at 10~dB. The band-mean clairvoyant loss hardly moves with
the clutter power: $-0.65$ to $-0.69$~dB for $B=25$ and $-0.27$~dB for $B=49$.

\begin{table}[!t]
\caption{Worst-carrier error (\textnormal{dB}) of the clutter-free limit (8) and exact form (7) against the clairvoyant subspace loss with clutter, G2, 21 carriers from 4 to 8~\textnormal{GHz}. Each entry is the range over five cells under test, 6 to 15~\textnormal{km} (clutter-to-noise ratio $-7$ to $-19$~\textnormal{dB} per element).}
\label{tab:limit}
\centering
\footnotesize
\setlength{\tabcolsep}{3pt}
\begin{tabular}{@{}cl ccccc@{}}
\toprule
 & & \multicolumn{5}{c}{Jammer INR (dB)} \\
\cmidrule(l){3-7}
$B$ & Form & 10 & 20 & 30 & 40 & 50 \\
\midrule
9  & limit & 12.4 & 9.9 & 8.5--8.6 & 7.6 & 4.4--5.0 \\
9  & exact & 0.5--0.6 & 2.7--3.0 & 4.1--4.7 & 4.4--5.0 & 4.4--5.0 \\
25 & limit & 0.24--0.26 & 0.03 & 0.02--0.04 & 0.02--0.04 & 0.02--0.04 \\
25 & exact & 0.02--0.04 & 0.02--0.04 & 0.02--0.04 & 0.02--0.04 & 0.02--0.04 \\
49 & limit & 0.04--0.05 & 0.02 & 0.02 & 0.02 & 0.02 \\
49 & exact & 0.02 & 0.02 & 0.02 & 0.02 & 0.02 \\
\bottomrule
\end{tabular}
\end{table}

In the $3\times3$ window, three jammers at nearly the target's elevation use up the window. All
three lie 0.16 bins below the target in $v$ at 6~GHz, so more than 90\% of the energy they
deliver to the window falls in the target's row, at three different offsets in $u$ (1.5,
$-3.2$ and 4.2 bins at 6~GHz). Nulling them takes the three columns of that row: the limit
gives $-10.1$ to $-16.2$~dB across the band, against $-1.0$ to $-4.3$~dB with any one jammer
removed, and on its own the jammer nearest the window edge has a loss of only $-0.24$~dB at
6~GHz. The two outer jammers leak only weakly into the window ($\eta\approx0.01$ at 6~GHz), so
the limit is reached slowly: the indicator
$\lambda_{\min}(\vect{D}^{1/2}\vect{A}_b\herm\vect{A}_b\vect{D}^{1/2})/\sigma_n^2$ is only 0.8 at
6~GHz and 10~dB INR and 79 at 30~dB. At 10~dB INR the exact form gives a band mean of $-3.6$~dB
against $-3.7$~dB for the clairvoyant loss. At higher INR the clutter ridge in the lower row of
the window, absent from \eqref{eq:exact}, adds loss once the jammers are nulled: at 50~dB the
clairvoyant loss is $-13.6$ to $-14.6$~dB against $-11.9$~dB from the exact form, and deeper
than the clutter-free limit at 20 of the 21 carriers.

A random-laydown screen for an ideal array without clutter tests the limit more widely:
$10^4$ draws of one to five jammers uniform over the visible region, with INRs uniform in
decibels from 10 to 50~dB and carriers uniform from 4 to 8~GHz, target as in G2. The limit is
computed with pseudo-inverses, so it exists wherever $c^2<1$; the largest $c^2$ was 0.984. Draws
in which $\vect{A}_b$ is nearly singular (smallest singular value of $\vect{A}_b/\sqrt{N}$ below
$3\times10^{-5}$ of the largest: 403, 182 and 108 of the $10^4$ for $W=3$, 5 and 7) have
$\kappa<0.03$, so that \eqref{eq:exact} is nowhere near its limit. Against the exact form, the
limit has a median error of 0.17, 0.02 and 0.002~dB and a 95th percentile of 1.79, 0.24 and
0.08~dB on the other draws, and of 1.86, 0.25 and 0.08~dB with them. Wherever $\kappa$
exceeded 100 (13 to 35\% of the draws, by window), the error was at most 0.02~dB; for $W=3$
the median error is 0.44~dB where $\kappa<1$ and 0.07~dB where $1\le\kappa<10$. In the G2
$3\times3$ window, where $c^2$ is 0.95 to 0.98, the same $\kappa=100$ still leaves an error of
0.5 to 1.1~dB.

We also evaluated \eqref{eq:limit} with the ideal manifold on data that pass through a
perturbed one, over 500 hopping dwells in P and in G2. Under mutual coupling between neighbours
within two element spacings ($\epsilon=0.1$), random gain and phase errors of 0.5~dB and
$5^\circ$ rms, and an outer ring of elements detuned by $-1.5$~dB and $+10^\circ$, the median
error over the carriers stayed within 0.14~dB in both scenarios. Spreading each jammer over
five tones 1~MHz apart, each with a fifth of its power, gives the jammers of P a second
eigenvalue 3.5 to 11~dB above the noise across the aperture, and the rank-one model was then
0.15~dB off in P and 1.68~dB off in G2. With the look direction 0.2 beamwidth off the target,
the median error was 0.28~dB in P and 0.46~dB in G2.

\subsection{Loss decomposition}
Scenario P meets the conditions of Proposition~3. Table~\ref{tab:decomp} compares its measured
loss of unloaded SMI at $K=2B$ with the product \eqref{eq:decomp} of the clairvoyant subspace
loss and the estimation factor. The band means agree within 0.02~dB for all three windows and
all three target azimuths. The standard error of a 500-dwell mean of the unloaded SMI loss at
$K=2B$ is 0.037, 0.025 and 0.019~dB for $B=9$, 25 and 49 at one carrier, so the band-mean
differences are within the Monte Carlo noise.

\begin{table}[!t]
\caption{Loss decomposition (11) against simulation: unloaded SMI at $K=2B$, scenario P, band means over 4 to 8~\textnormal{GHz} (\textnormal{dB}). Factor: $(K+2-B)/(K+1)$.}
\label{tab:decomp}
\centering
\footnotesize
\begin{tabular}{@{}cccrrrr@{}}
\toprule
$u_s$ & $B$ & $K$ & Subspace & Factor & Predicted & Measured \\
\midrule
0.55 & 9  & 18 & $-1.34$ & $-2.37$ & $-3.71$ & $-3.70$ \\
0.55 & 25 & 50 & $-0.24$ & $-2.76$ & $-3.00$ & $-3.00$ \\
0.55 & 49 & 98 & $-0.05$ & $-2.88$ & $-2.93$ & $-2.93$ \\
0.20 & 9  & 18 & $-2.13$ & $-2.37$ & $-4.50$ & $-4.52$ \\
0.20 & 25 & 50 & $-0.92$ & $-2.76$ & $-3.68$ & $-3.68$ \\
0.20 & 49 & 98 & $-0.45$ & $-2.88$ & $-3.33$ & $-3.34$ \\
0.85 & 9  & 18 & $-1.08$ & $-2.37$ & $-3.45$ & $-3.45$ \\
0.85 & 25 & 50 & $-0.36$ & $-2.76$ & $-3.13$ & $-3.13$ \\
0.85 & 49 & 98 & $-0.16$ & $-2.88$ & $-3.04$ & $-3.03$ \\
\bottomrule
\end{tabular}
\end{table}

\subsection{Estimation loss and equivalent support in clutter}
Figure~\ref{fig:est} gives the estimation loss against $K$ for the three windows, with the
jammers on (G2) and with clutter alone (G1), from which $K_{\mathrm{eff}}$ follows by
\eqref{eq:keff}. With the jammers on, unloaded SMI follows the factor of \eqref{eq:rmb} for
every texture. For $B=25$ the measured loss is $-5.62$, $-2.94$, $-2.03$, $-1.23$ and
$-0.88$~dB at $K=32$, 48, 64, 96 and 128, against $-5.64$, $-2.92$, $-2.00$, $-1.23$ and
$-0.89$~dB predicted, and Weibull and K-distributed clutter change these by at most 0.05~dB.
The equivalent support of the $5\times5$ window is within 2\% of $K$ for every texture, and it
stays there when the texture is held over blocks of 4 or 16 range cells. The 30~dB jammers are
white from cell to cell and dominate the training data, so the clutter texture has little
effect on the estimate, except in the $3\times3$ window, where the jammers' leakage is weak:
its equivalent support falls to $0.80K$ at $K=48$ and $0.63K$ at $K=128$ in Weibull clutter
and to $0.86K$ at $K=128$ in K-distributed clutter.

\begin{figure}[!t]
\centering
\includegraphics[width=0.84\columnwidth]{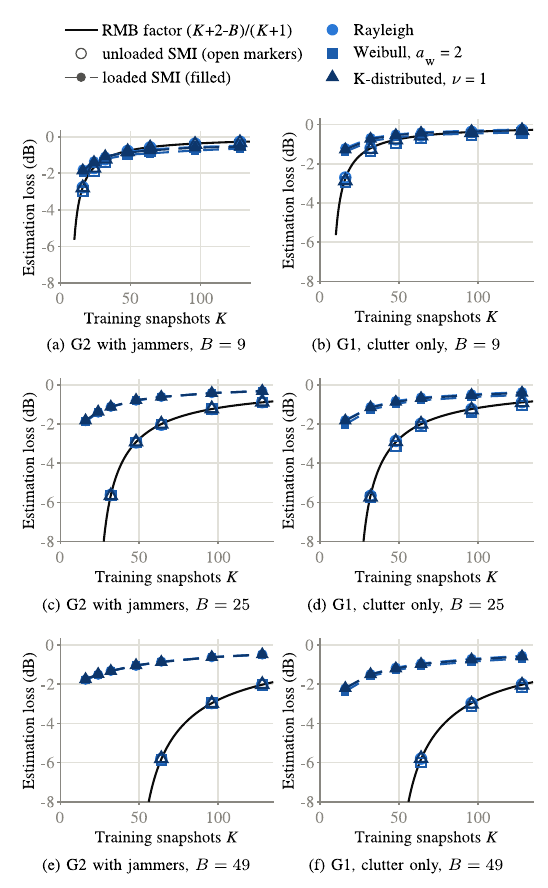}
\caption{Estimation loss (SMI loss divided by the subspace loss of the same dwell) against the number of training snapshots for the three textures. Open markers: unloaded SMI; filled markers and dashed lines: loaded SMI; solid line: the factor $(K+2-B)/(K+1)$.}
\label{fig:est}
\end{figure}

With clutter alone, Rayleigh and K-distributed clutter still follow \eqref{eq:rmb}: for $B=25$
the equivalent support stays within 3\% of $K$ from $K=32$ to 128. Weibull clutter, whose
texture variance is five times that of the K-distributed case, costs support as $K$ grows. For
$B=25$, $K_{\mathrm{eff}}/K$ is 0.99 at $K=32$, 0.96 at 48 and 64, 0.91 at 96 and 0.87 at 128;
for $B=9$ it falls to 0.64 at $K=128$. The cost in decibels is small, because support is lost
only where the estimation loss is already small: at $K=128$ the estimation loss of unloaded SMI
with $B=25$ is $-1.04$~dB in Weibull clutter against $-0.88$~dB in Rayleigh clutter. Holding the
K-distributed texture over 4 or 16 cells gives a support of 47.3 and 47.9 at $K=48$ (48.3 with
independent cells).

A radar short of range cells can pool several pulses of each cell. We estimate the pooled
support from $K=16$ cells and the unloaded estimation loss. With the jammers on, 4 or 16 pooled
pulses give the $5\times5$ window 98 to 99\% of the nominal support $KM_p$ at wind speeds of
1.3, 4.5 and 9.0~m/s, and the $3\times3$ window 74 to 76\% from 16 pulses. With clutter alone
the $5\times5$ window gets 98\% from 4 pulses but only 86 to 90\% from 16, least at the lowest
wind speed, and the $3\times3$ window 70 to 74\% from 16. The pulses of one cell see nearly the
same clutter, whose spectrum at 4.5~m/s wind and 6~GHz is a few hertz wide against a 94~Hz
Doppler bin, so pooling adds independent noise samples faster than independent clutter samples.
Loading at the noise power cuts the estimation loss sharply (Figure~\ref{fig:est}, filled
markers): for $B=25$ in G2 it is $-1.85$~dB at $K=16$, where unloaded SMI cannot run at all,
$-0.80$~dB at $K=48$ and $-0.31$~dB at $K=128$. With clutter alone in G1, the total loss of
loaded SMI with $B=25$ is $-2.52$~dB at $K=16$ and $-1.09$~dB at $K=128$, against $-10.66$ and
$-2.83$~dB for loaded element-space SMI on the same Rayleigh-clutter snapshots.

\subsection{Choosing the window}
Figure~\ref{fig:win} shows the trade between window size and training support in scenario P.
Without loading the estimation factor dominates and the best window shrinks as $K$ falls: at
$u_s=0.55$ and $K=50$ the losses are $-2.08$, $-3.00$ and $-12.38$~dB for $B=9$, 25 and 49,
while at $K=98$, $B=25$ is best ($-1.44$~dB against $-1.70$ and $-2.93$~dB). With loading the
two larger windows always beat $B=9$. At $K=50$ the band means are $-1.71$, $-0.88$ and
$-0.94$~dB for $B=9$, 25 and 49; at this azimuth $B=25$ and 49 stay within 0.16~dB of each
other at every $K$, and $B=9$ trails them by 0.7 to 1~dB. Loaded element-space SMI with all
1024 channels (dotted) has losses of $-0.45$ to $-0.80$~dB.

\begin{figure}[!t]
\centering
\includegraphics[width=0.84\columnwidth]{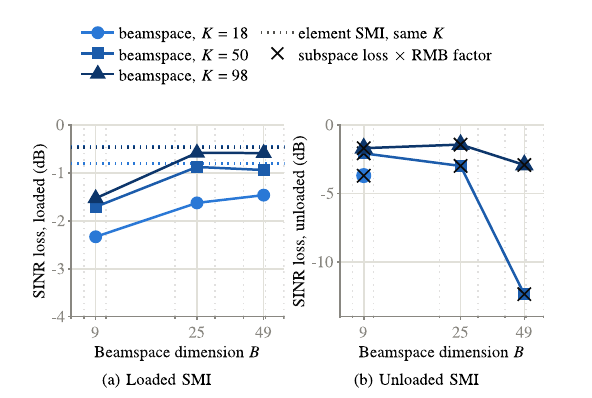}
\caption{Band-mean SINR loss against window size, scenario P, $u_s=0.55$. In (a) the dotted lines give loaded element-space SMI on the same snapshots; in (b) the crosses give the product \eqref{eq:decomp}.}
\label{fig:win}
\end{figure}

Table~\ref{tab:choice} tests \eqref{eq:bstar} as an oracle rule in scenario G, with the nominal $K$ and the
clairvoyant subspace loss. With the nominal $K$, $B^\star(K)$ picks the best window for unloaded
SMI in all 39 cases, and in all 30 cases where more than one window can run; using the
equivalent support of the $5\times5$ window instead gives the same choices. Many of these
decisions are settled in advance. In 9 cases ($K\le24$ with the jammers on, $K=16$ with clutter
alone) only the $3\times3$ window can run, and in the 15 cases with the jammers on and
$K\ge32$ the $3\times3$ window starts about 10~dB down. The informative cases are the 15 with
clutter alone and $K\ge32$: the three windows have subspace losses of $-1.3$, $-0.7$ and
$-0.5$~dB, the smaller estimation loss of the $3\times3$ window keeps it best, and the margin
shrinks from 3.8 to 3.9~dB at $K=32$ to a tie at $K=128$ ($-1.60$~dB). The margins are
differences of 500-dwell means with standard errors of a few hundredths of a decibel, so only
that tie is within the Monte Carlo noise.

\begin{table}[!t]
\caption{Oracle window size (12) with the nominal $K$, the window with the smallest measured loss, and the margin (\textnormal{dB}) between it and the next. $K_{\mathrm{eff}}$: equivalent support of the $5\times5$ window. Where the textures disagree the entry lists Rayleigh / Weibull / K-distributed clutter.}
\label{tab:choice}
\centering
\footnotesize
\setlength{\tabcolsep}{2.5pt}
\begin{tabular}{@{}r l c c c c@{}}
\toprule
$K$ & $K_{\mathrm{eff}}$, $B=25$ & $B^\star$ & Best, unl. & Margin & Best, loaded \\
\midrule
\multicolumn{6}{@{}l}{\textit{G2, jammers at 30 dB}} \\
16  & n/a & 9 & 9 & n/a & 49 \\
24  & n/a & 9 & 9 & n/a & 49 \\
32  & 32.1 / 32.0 / 32.0 & 25 & 25 & 4.3--4.5 & 49 \\
48  & 47.8 / 47.6 / 48.1 & 25 & 25 & 7.2--7.6 & 49 \\
64  & 63.2 / 63.5 / 63.9 & 25 & 25 & 3.4--3.5 & 49 \\
96  & 96.2 / 96.1 / 97.1 & 25 & 25 & 1.3--1.4 & 49 \\
128 & 129.4 / 127.4 / 130.1 & 25 & 25 & 0.7 & 49 \\
\midrule
\multicolumn{6}{@{}l}{\textit{G1, clutter only}} \\
16  & n/a & 9 & 9 & n/a & 9 / 9 / 25 \\
32  & 31.9 / 31.7 / 31.8 & 9 & 9 & 3.8--3.9 & 25 \\
48  & 48.7 / 45.9 / 48.3 & 9 & 9 & 1.5--1.6 & 25 \\
64  & 64.4 / 61.3 / 63.9 & 9 & 9 & 0.8 & 25 \\
96  & 95.6 / 87.3 / 93.4 & 9 & 9 & 0.25--0.26 & 49 / 25 / 49 \\
128 & 129.4 / 111.7 / 126.2 & 9 & 9 & 0.00--0.02 & 49 \\
\bottomrule
\end{tabular}
\end{table}

The loaded processor prefers $B=49$ at every $K$ with the jammers on, and $B=25$ or 49 from
$K=32$ with clutter alone, as the interference-rank argument of Section~IV suggests. In the 15
cases where $B^\star$ is more than the smallest window (all with the jammers on), the best
loaded window is larger still, and in G1 and G2 the 49-beam window is never more than 0.2~dB
worse than the best.

\subsection{Sensitivity of the rule to its inputs}\label{sec:sens}
Rule \eqref{eq:bstar} needs $\rho(B)$, and so the jammer directions, powers and number. In scenario P, where
\eqref{eq:exact} and \eqref{eq:decomp} are exact, the effect of errors in these inputs can be computed
without simulation; the closed forms reproduce the subspace-loss column of Table~\ref{tab:decomp} to within 0.004~dB. For each of the three look directions, 11 carriers and $K=32$, 50, 64, 98 and 128, 200 draws
perturbed the inputs of \eqref{eq:exact} before \eqref{eq:bstar} was evaluated. The regret is the true loss of
the window chosen relative to that of the best window (Table~\ref{tab:sens}). With errors of 0.25 bin rms in each
direction cosine the rule still chooses the best window in 86\% of the cases, the mean regret is 0.04~dB and the largest
is 1.6~dB. Errors of $\pm10$~dB in each INR hardly matter, and missing one of the three jammers gives 83\% and a largest regret of
1.3~dB. The direction of the jammers is the input that matters most. In this scenario the decision is easy, however: a fixed
$3\times3$ window already has a mean regret of 0.14~dB, against 0.99~dB for a fixed $5\times5$ window. This tests the rule against perturbed jammer parameters. The route through the true covariance, used in clutter, is not assessed, and nothing here estimates the inputs from data.

\begin{table}[!t]
\caption{Effect of errors in the inputs of \eqref{eq:bstar}, scenario P, unloaded SMI (33\,000 draws per row unless stated). Best: share of draws in which the rule chooses the best window. Regret: loss of the chosen window relative to the best.}
\label{tab:sens}
\centering
\footnotesize
\setlength{\tabcolsep}{3pt}
\begin{tabular}{@{}lrrrr@{}}
\toprule
 & & \multicolumn{3}{c}{Regret (\textnormal{dB})} \\
\cmidrule(l){3-5}
Perturbation of the inputs & Best (\%) & Mean & 95th & Max \\
\midrule
Directions, 0.25 bin rms & 85.8 & 0.04 & 0.31 & 1.64 \\
Directions, 1 bin rms & 78.7 & 0.07 & 0.46 & 3.40 \\
INRs, $\pm10$~\textnormal{dB} & 99.5 & 0.00 & 0.00 & 0.33 \\
One jammer missed & 83.0 & 0.05 & 0.42 & 1.27 \\
One extra jammer & 95.2 & 0.01 & 0.00 & 3.40 \\
Limit \eqref{eq:limit} used (165 cases) & 98.8 & 0.00 & 0.00 & 0.43 \\
\midrule
Fixed $3\times3$ window (165 cases) & n/a & 0.14 & 0.68 & 1.27 \\
Fixed $5\times5$ window (165 cases) & n/a & 0.99 & 3.56 & 4.06 \\
\bottomrule
\end{tabular}
\end{table}

\section{Discussion}

To choose a window, evaluate \eqref{eq:exact} with all the jammers together for each candidate.
The single-jammer form \eqref{eq:one} misses windows whose columns or rows are used up by
several jammers near the target's elevation or azimuth, and the limit \eqref{eq:limit} is safe
only when $\kappa$ is large compared with $c^2/(1-c^2)$. For unloaded SMI, take $B^\star(K)$
from \eqref{eq:bstar}, an oracle rule whose choice changed little under errors in its inputs (Table~\ref{tab:sens}). For SMI loaded at the noise power, take at least $B^\star(K)$; the
49-beam window cost at most 0.2~dB more than the best, but loaded element-space SMI lost up to
8.1~dB more than $B=25$ with clutter alone, so the computing budget alone should not set the
upper end.

Propositions~2 and 3 are exact under their assumptions: \eqref{eq:exact} for point jamming with
an ideal manifold, and \eqref{eq:decomp} for independent Gaussian training. Everything else is
a numerical observation from one laydown per scenario, one array size and one loading level.
The clutter levels are modest (5.5~dB below the noise per element in G1), and stronger clutter
would make the clutter term that \eqref{eq:exact} leaves out matter more. The rank-one form
should not be used for jammers spread over the receive band, whose second eigenvalue it
ignores.

\section{Conclusion}

For unloaded SMI in the scenarios tested, the window size followed from the subspace loss and
the training support: the closed form \eqref{eq:exact} gave the subspace loss of the 25- and
49-beam windows within 0.04~dB, and $B^\star(K)$ chose the best window in all 30 cases with more
than one runnable window, although many of these were settled in advance. For SMI loaded at the
noise power the best window was never smaller than $B^\star(K)$. The rule is an oracle, and it was robust to errors in its inputs, least to those in the jammer directions. An estimator of its
inputs from the training snapshots, with the data dependence that this brings, remains to be built, and the interference-rank form of
the loaded SMI loss \cite{besson2021} is the natural basis for a window rule for loaded
estimators.

\section*{Acknowledgment}
The authors used Claude (Anthropic), a large language model, to help prepare this version of the paper from their earlier draft. The authors provided the results and the derivations; Claude helped to format the paper for the conference template and to proof-read it. The authors reviewed and accepted all changes, approved the final text, and take full responsibility for the content.

\end{document}